\newtheorem{theorem}{Theorem}
\newtheorem{thm}{Thm}[section]
\newtheorem{proposition}[thm]{Proposition}
\theoremstyle{definition}
\newtheorem{remark}[thm]{Remark}
\newtheorem{definition}[thm]{Definition}
\newtheorem{example}[thm]{Example}
\def\vect{\mathfrak{vect}}
\def\tr{\mathrm{tr}}
\def\RR{\mathbb{R}}
\title{Feynman path integrals and Lebesgue-Feynman measures}
\author{James Montaldi \& Oleg G.\ Smolyanov}
\date{}
\begin{document}

\maketitle

We call a \emph{Lebesgue-Feynman measure} (LFM) any generalized measure (distribution in the sense of Sobolev and Schwartz) on a locally convex topological vector space (LCS) $E$ which is translation invariant. In what follows $E$ is usually a Hilbert space.  If $\dim E<\infty$ then any LFM on $E$ coincides with (or is defined by) the standard Lebesgue measure (up to a constant real multiple). On the other hand if $E$ is of infinite dimension then by the famous theorem of A.~Weil, there does not exist on $E$ a nonzero $\sigma$-additive locally finite $\sigma$-finite Borel measure which is translation invariant. Clearly, the Lebesgue-Feynman measures does not satisfy these properties of Borel measures.

The idea of a LFM on a space of functions of a real variable, taking values in the configuration space, or  phase space, of a classical Lagrangian, or Hamiltonian, system was used in the very first papers by Feynman; indeed, one can say that the path integrals introduced in those papers give the values taken by the LFM on the integrand. That is, roughly speaking, the integration in the Feynman path integral is integration with respect to the LFM\footnote{It is worth mentioning that Feynman introduced his integration just three years after Weil proved his non-existence theorem}.

In the present paper, we define the class of LFMs on any LCS and investigate transformations of the LFM generated by transformations of the domain and also discuss the connections of these transformations of the LFM with so-called quantum anomalies, improving some recent results \cite{MS,GRS,GMS}.  We revisit the contradiction between the points of view on quantum anomalies presented in the books of Fujikawa and Suzuki \cite{FS} on the one hand, and of Cartier and DeWitt-Morette \cite{CdWM} on the other.

The results of the present paper can be considered as a new formalization of Feynman's results, following in some sense as close as possible Feynman's own approach. It is worth noting that in numerous existing formalizations of Feynman path integrals, one considers only analogues of the Gaussian measure and their analytic continuations, whereas Feynman did not consider any such connection between his path integrals and Gaussian measures.

The connection between the two can be described, in our terminology, as follows. One extracts from the integrand of the Feynman integral, the exponent of $i=\sqrt{-1}$ times the quadratic part of the action (written in either Lagrangian or Hamiltonian form); the product of this exponent and the LFM is just what one calls the Feynman (pseudo-)measure. This means that the latter measure is the generalized measure whose generalized density is this exponent. In general the generalized densities of any measure \cite{SW,Kirillov,MS} can be considered as densities with respect to the LFM.

The paper is organized as follows. In the first section we define the notion of LFM and consider the typical example, the LFM that defines the Feynman path integral. In the second section we consider two approaches to transformations of the LFM: firstly via derivatives along proper vector fields, and secondly via finite dimensional approximations. In the final section we consider some relations between the results obtained in Section 2 and quantum anomalies.  In some places the presentation is more algebraic and precise analytic hypotheses are not elaborated.

\section{Generalized measures on locally convex spaces}

Let $E,G$ be locally convex topological vector spaces. Recall that a mapping $f:E\to G$ is said to be differentiable at $x\in E$ if there exists a continuous linear map $f'(x):E\to G$ (called the derivative at $x$) such that for any convergent sequence $(h_n)$ in $E$ and any sequence $(t_n)$ converging to 0 in $\mathbb{R}$, $f'(x)$ satisfies
$$\lim_{n\to \infty}\left( \frac{f(x+t_nh_n)-f(x)}{t_n} - f'(x)h_n \right) = 0.$$
The derivatives of higher order are defined by induction, assuming that the subsequent spaces in which the derivatives take values are endowed with the topology of compact convergence.

Let $S(E)$ be a real LCS consisting of some class of (continuous) infinitely differentiable complex-values functions on $E$, and denote by $S'(E)$ the dual space equipped with a locally convex topology compatible with the natural duality between $S(E)$ and $S'(E)$. 

\begin{remark} 
The elements of $S'(E)$ can be oconsidered as generalized measures; for any $\nu\in S'(E)$ and any $\varphi\in S(E)$, the value $\nu(\varphi)\equiv(\nu,\varphi)$ will also be denoted $\int\varphi(x)\nu(dx)$ and be called the integral of $\varphi$ with respect to $\nu$. 
\end{remark}

\begin{remark} 
While in the classical theory of distributions, it is often convenient to choose the space $S(E)$ to be `small', in the application to path integrals it is convenient to include in this space the collection of functions that are integrable with respect to generalized measures (of course, when $S(E)$ becomes larger, the dual space $S'(E)$ may become smaller).
\end{remark}

We assume that $S(E)$ satisfies some natural assumptions. In particular, 
\begin{enumerate}
\item For any $h\in E$ the mapping $\varphi\mapsto \varphi'h$ maps $S(E)$ into itself and is continuous;
\item for any $\psi\in S(E)$ and any $h\in E$, the function $\psi_h$ on $E$ defined by 
$$\psi_h(x) = \psi(x+h)$$
(the $h$-shift of $\psi$) also belongs to $S(E)$;
\item the mapping $F_\psi:S(E)\to S(E),\; \psi\mapsto\psi_h$ is continuous,
\item for any $\psi\in S(E)$ the mapping $f_\psi:E\to S(E)$, $h\mapsto\psi_h$ is continuous and differentiable,
\item for any $\psi\in S(E)$ and any $h\in E$ the following identity holds:
$$(f_\psi')(0)h = \psi'(\cdot)h;$$
 consequently $(f_\psi')(0)=\psi$.
\end{enumerate}

\begin{definition}\label{def:derivative1}
Let $h\in E$ and $\nu\in S'(E)$. Then the \emph{derivative of} $\nu$ along $h$ is the element $\nu'h\in S'(E)$ for which, for all $\varphi\in S(E)$, $(\nu'h,\varphi) = -(\nu,\varphi'h)$
\end{definition}

\begin{definition}\label{def:derivative2}
For any $h\in E$, let $F^*_h:S'(E)\to S'(E)$ be the adjoint of $F_h$, and for any $\nu\in S'(E)$ let $\nu_h=F_h^*\nu$ (the $h$-shift of $\nu$) and finally for any $\nu\in S'(E)$ define $\bar{f}_\nu:E\to S'(E)$ by $\bar{f}_\nu(h)=\nu_h$. Then the \emph{derivative of} $\nu$ along $h$ is the element $\nu'h$ of $S'(E)$ defined by $\nu'h=-\bar{f}_\nu'(0)h$.
\end{definition}

\begin{proposition}
For any $\nu\in S'(E)$ and $h\in E$, the two definitions of $\nu'h$ above are equivalent.
\end{proposition}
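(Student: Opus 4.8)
The plan is to fix an arbitrary test function $\varphi\in S(E)$ and verify that the scalar $(\nu'h,\varphi)$ prescribed by Definition~\ref{def:derivative2} equals $-(\nu,\varphi'h)$, which is exactly the quantity prescribed by Definition~\ref{def:derivative1}. Since the topology on $S'(E)$ is compatible with the duality between $S(E)$ and $S'(E)$, an element of $S'(E)$ is determined by its pairings against all $\varphi\in S(E)$; so establishing this equality for every $\varphi$ forces the two candidate elements $\nu'h$ to coincide.

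First I would unwind the right-hand side of Definition~\ref{def:derivative2}. It suffices to work with the restriction of $\bar f_\nu$ to the line $t\mapsto th$, so that $\bar f_\nu'(0)h$ is computed as the one-parameter (Gateaux) derivative obtained by taking the constant sequence $h_n\equiv h$ in the limit defining the derivative. Writing $\bar f_\nu(th)=\nu_{th}=F^*_{th}\nu$ and using the adjoint relation together with $\varphi_{th}=f_\varphi(th)$ from assumption~(4), I obtain for each real $t$
\[
(\bar f_\nu(th),\varphi)=(F^*_{th}\nu,\varphi)=(\nu,F_{th}\varphi)=(\nu,\varphi_{th})=(\nu,f_\varphi(th)).
\]

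Second, I would commute the differentiation in $t$ past the two pairings. Because the topology on $S'(E)$ is compatible with the duality, evaluation $\langle\,\cdot\,,\varphi\rangle$ is a continuous linear functional on $S'(E)$; likewise $\nu$ is by definition a continuous linear functional on $S(E)$. The key fact I need is a chain rule: the composition of a map differentiable in the sense of the excerpt with a continuous linear functional is again differentiable, with derivative equal to the functional applied to the derivative. This lets me pull both functionals inside the limit defining the derivative and conclude
\[
(\bar f_\nu'(0)h,\varphi)=\frac{d}{dt}\Big|_{t=0}(\nu,f_\varphi(th))=(\nu,f_\varphi'(0)h).
\]
Finally, assumption~(5) identifies $f_\varphi'(0)h=\varphi'(\cdot)h=\varphi'h\in S(E)$, whence $(\nu'h,\varphi)=-(\bar f_\nu'(0)h,\varphi)=-(\nu,\varphi'h)$, matching Definition~\ref{def:derivative1}.

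I expect the main obstacle to be the rigorous justification of this chain rule for the excerpt's sequential notion of differentiability, i.e.\ verifying that a continuous linear functional $\ell$ may be moved inside the defining limit: from $\big(g(th_n)-g(0)\big)/t-g'(0)h_n\to 0$ one deduces that the image under $\ell$ tends to $\ell(0)=0$, using the continuity and linearity of $\ell$. A related point requiring care is that restricting to the constant sequence $h_n\equiv h$ genuinely recovers the action $g'(0)h$ of the full derivative on $h$, which is precisely what the limit definition guarantees when one takes $h_n\equiv h$ and $t_n\to 0$.
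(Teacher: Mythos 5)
The paper states this proposition without proof, so there is no argument by the authors to compare yours against; judged on its own merits, your proof is correct and is surely the intended one. You reduce everything to pairings with a fixed test function $\varphi$: the adjoint relation gives $(F_{th}^*\nu,\varphi)=(\nu,F_{th}\varphi)=(\nu,f_\varphi(th))$, so the weak evaluation of $\bar f_\nu$ along the line $t\mapsto th$ coincides with the evaluation of $\nu$ against $f_\varphi$ along the same line; differentiating at $t=0$ and using that both $\nu$ and $\langle\,\cdot\,,\varphi\rangle$ are continuous linear functionals (the latter because the topology on $S'(E)$ is compatible with the duality) lets you pass the derivative inside both pairings, and assumption (5) then converts $f_\varphi'(0)h$ into $\varphi'h$. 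The two points of care you flag --- that a continuous linear functional commutes with the sequential limit defining the derivative, and that the constant sequence $h_n\equiv h$ recovers the action $g'(0)h$ of the full derivative --- are precisely the right ones, and both follow immediately from the paper's definition of differentiability.

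The one caveat worth recording is that your argument establishes the identity $(-\bar f_\nu'(0)h,\varphi)=-(\nu,\varphi'h)$ \emph{given} that $\bar f_\nu$ is differentiable at $0$; that is, it shows the two definitions agree whenever Definition~\ref{def:derivative2} is applicable. The functional of Definition~\ref{def:derivative1} always exists, since assumption (1) makes $\varphi\mapsto -(\nu,\varphi'h)$ a continuous linear functional on $S(E)$, so a maximally pedantic reading of ``equivalent'' would also require proving that $\bar f_\nu$ \emph{is} differentiable --- in the paper's sequential sense, uniformly over convergent sequences $(h_n)$, and in the unspecified topology on $S'(E)$. That existence statement cannot be derived from the listed assumptions alone and is evidently among the ``precise analytic hypotheses'' the authors announce they will not elaborate; your proof sits exactly at the paper's level of rigor, and the coincidence of the two elements, which is the substantive content, is fully established.
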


\begin{proposition}
A generalized measure $\nu$ is translation invariant (i.e., $F_h^*\nu=\nu$ for all $h\in E$) if and only if $\nu'h=0$ for all $h\in E$.
\end{proposition}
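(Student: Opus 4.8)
The plan is to exploit the equivalence of the two descriptions of $\nu'h$ established in the previous proposition, using whichever is more convenient for each implication, and to reduce the whole question to a one-parameter (scalar) statement to which the classical fundamental theorem of calculus applies.

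For the forward implication I would argue directly from Definition~\ref{def:derivative2}. If $F_h^*\nu=\nu$ for every $h\in E$, then the map $\bar f_\nu\colon E\to S'(E)$, $h\mapsto\nu_h=F_h^*\nu$, is the constant map with value $\nu$. A constant map between LCS has zero derivative, so $\bar f_\nu'(0)=0$ and hence $\nu'h=-\bar f_\nu'(0)h=0$ for all $h$. (Equivalently, via Definition~\ref{def:derivative1}: the scalar function $h\mapsto(\nu,\varphi_h)$ is constant, and differentiating it at $0$ using assumption~(5) gives $(\nu,\varphi'k)=0$, i.e.\ $(\nu'k,\varphi)=0$ for all $\varphi$.)

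The substantive direction is the converse. Assume $\nu'h=0$ for all $h$, fix $h\in E$ and $\varphi\in S(E)$, and consider the scalar function
$$g(t)=(\nu,\varphi_{th}),\qquad t\in\RR.$$
I would first show $g$ is differentiable with $g'(t)=(\nu,(\varphi_{th})'h)$. This follows by writing $t\mapsto\varphi_{th}$ as the composite of the linear map $t\mapsto th$ with $f_\varphi$, applying the chain rule and assumption~(4), and using the cocycle identity $f_\varphi'(s)h=(\varphi_s)'h=(\varphi'h)_s$ (which comes from $\varphi_{s+u}=(\varphi_s)_u$ together with assumption~(5)); the derivative passes through the continuous linear functional $\nu$. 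Setting $\psi=\varphi_{th}$, Definition~\ref{def:derivative1} and the hypothesis $\nu'h=0$ give $(\nu,\psi'h)=-(\nu'h,\psi)=0$, so $g'(t)=0$ for every $t$. A scalar-valued function on $\RR$ with vanishing derivative is constant, whence $g(1)=g(0)$, that is $(\nu,\varphi_h)=(\nu,\varphi)$. Since this holds for all $\varphi\in S(E)$ and $(F_h^*\nu,\varphi)=(\nu,\varphi_h)$, we conclude $F_h^*\nu=\nu$ for every $h$.

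The main obstacle is the analytic bookkeeping in the converse: justifying that $g$ is genuinely $C^1$, that differentiation commutes with the pairing $(\nu,\cdot)$, and verifying the cocycle identity for $f_\varphi'$. All of this rests on the differentiability and continuity assumptions~(1)--(5) on $S(E)$; granting those, the remaining content is the passage from a vanishing derivative to constancy, which---because $g$ takes scalar values---is merely the one-dimensional fundamental theorem of calculus. This reduction of the infinite-dimensional invariance statement to a scalar one-parameter family is precisely what makes the argument tractable.
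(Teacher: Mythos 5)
The paper states this proposition without giving any proof, so there is no argument of the authors' to compare yours against; judged on its own terms, your proof is correct. The forward direction is immediate, as you say, from either Definition~\ref{def:derivative1} or Definition~\ref{def:derivative2}. The substance is the converse, and your reduction to the scalar function $g(t)=(\nu,\varphi_{th})$ is the right move. The key technical point, which you handle correctly, is that assumption (5) identifies $f_\varphi'$ only at the origin; you need the cocycle identity $\varphi_{s+u}=(\varphi_s)_u$ (together with assumption (2), which guarantees $\varphi_s\in S(E)$, and assumption (4) for differentiability at an arbitrary point) to transport that identification to the point $th$ and conclude $g'(t)=\bigl(\nu,(\varphi_{th})'h\bigr)=-\bigl(\nu'h,\varphi_{th}\bigr)=0$. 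Two minor remarks. First, since $g$ is complex-valued, the passage from $g'\equiv 0$ to $g$ constant is the mean value theorem applied to the real and imaginary parts; no continuity of $g'$ is needed, so invoking the fundamental theorem of calculus is slightly misleading (and would require more than you have established). Second, your parenthetical treatment of the forward direction via Definition~\ref{def:derivative1} is worth promoting to the main argument: it makes the whole proof independent of the preceding equivalence proposition, which the paper likewise states without proof.
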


We call such translation-invariant measures \emph{Lebesgue-Feynman measures} (or LFMs).

A vector field on $E$ is a mapping $k:E\to E$. Let $\vect(E)$ denote a collection of some infinitely differentiable vector fields on $E$.  If $k\in \vect(E)$ and $\psi\in S(E)$ then the derivative of $\psi$ along $k$ is denoted by $\psi'k$ and is defined by
$$(\psi'k)(x) = \psi'(x)k(x).$$
The derivative of $\nu\in S'(E)$ along a vector field is defined in a similar way.

One extends the $h$-shift of $\psi\in S(E)$ defined above to $k$-shift for $k$ a vector field in the obvious way: $\psi_k(x)=\psi(x+k(x))$, and the mapping $G_k:S(E)\to S(E)$ is defined similarly to $F_h$ by $G_k(\psi)=\psi_k$. 

\begin{remark}
For $\psi\in S(E)$ let $g_\psi:k\mapsto\psi_k$ be the mapping from $\vect(E)$ into $S(E)$ and suppose this mapping is differentiable.  Then
$$(g_\psi')(0)k=\psi'(\cdot)k(\cdot) = (\psi'k)(\cdot).$$
\end{remark}

The analogues of Definitions \ref{def:derivative1} and \ref{def:derivative2} can be formulated also for generalized measures along vector fields.  The analogue of Definition  \ref{def:derivative1} is obtained simply by replacing $h\in E$ by $k\in \vect(E)$. The analogue of Definition \ref{def:derivative2} can be formulated as follows:

\begin{definition}\label{def:derivative2-vfd}
For $k\in \vect(E)$ let $G_k^*$ be the adjoint of $G_k$, and for $\nu\in S'(E)$ let $\nu_k=G_k^*\nu$ (the $k$-shift of $\nu$) and put $g_\nu(t) = \nu_{tk}$ for $t\in\mathbb{R}$. The \emph{the derivative} $\nu'k$ of $\nu$ along $k$ is defined to be
$$\nu'k = -g_\nu'(0).$$
\end{definition}

\begin{proposition}
For any $\nu\in S'(E)$ and $k\in \vect(E)$, the two definitions of $\nu'k$ above are equivalent.
\end{proposition}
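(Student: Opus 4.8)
The plan is to unfold Definition~\ref{def:derivative2-vfd} by testing against an arbitrary $\varphi\in S(E)$ and to reduce it to the vector-field analogue of Definition~\ref{def:derivative1}, namely to the relation $(\nu'k,\varphi)=-(\nu,\varphi'k)$. This runs parallel to the earlier Proposition on the equivalence of Definitions~\ref{def:derivative1} and~\ref{def:derivative2} for constant fields $k\equiv h$; the only novelty is that the $h$-shift is replaced throughout by the $k$-shift and the directional derivative $\varphi'h$ by $(\varphi'k)(x)=\varphi'(x)k(x)$.

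First I would fix $\varphi\in S(E)$ and rewrite the scalar function $t\mapsto(g_\nu(t),\varphi)$ using the adjoint relation defining $\nu_{tk}=G_{tk}^*\nu$. Since $g_\nu(t)=\nu_{tk}$, this gives
$$(g_\nu(t),\varphi)=(G_{tk}^*\nu,\varphi)=(\nu,G_{tk}\varphi)=(\nu,\varphi_{tk}),$$
where $\varphi_{tk}(x)=\varphi(x+tk(x))$.

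Next I would differentiate at $t=0$ and commute the scalar derivative with the two continuous linear functionals in play. On the $S'(E)$ side the evaluation $\mu\mapsto(\mu,\varphi)$ is continuous and linear, so by the chain rule $(g_\nu'(0),\varphi)=\frac{d}{dt}\big|_{t=0}(g_\nu(t),\varphi)$; on the $S(E)$ side $\nu$ is a continuous linear functional and hence commutes with the derivative of the $S(E)$-valued map $t\mapsto\varphi_{tk}$. Combining these with the preceding Remark on $g_\psi$, which yields $\frac{d}{dt}\big|_{t=0}\varphi_{tk}=(g_\varphi')(0)k=\varphi'k$, I obtain
$$(g_\nu'(0),\varphi)=\frac{d}{dt}\bigg|_{t=0}(\nu,\varphi_{tk})=\Big(\nu,\frac{d}{dt}\bigg|_{t=0}\varphi_{tk}\Big)=(\nu,\varphi'k).$$
Hence $(\nu'k,\varphi)=-(g_\nu'(0),\varphi)=-(\nu,\varphi'k)$, which is exactly the defining relation of the vector-field analogue of Definition~\ref{def:derivative1}. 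As $\varphi$ was arbitrary, the two definitions of $\nu'k$ coincide.

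The main obstacle is the justification of these interchanges of differentiation with the linear functionals. The delicate one is moving $\nu$ inside the $t$-derivative: it requires that $t\mapsto\varphi_{tk}$ be differentiable as a map $\RR\to S(E)$ with derivative $\varphi'k$ at $t=0$, which is the vector-field version of hypotheses (1)--(5) on $S(E)$ together with the cited Remark, continuity of $\nu$ on $S(E)$ then delivering the commutation. The interchange on the $S'(E)$ side, as well as the very existence of $g_\nu'(0)$ as an element of $S'(E)$, rests on the topology of $S'(E)$ being compatible with the duality, so that differentiability of the $S'(E)$-valued curve $g_\nu$ can be tested pointwise against $S(E)$; once this is granted the computation above is routine.
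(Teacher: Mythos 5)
Your proof is correct. The paper actually states this proposition (like its constant-direction predecessor) without any proof, so there is no argument of the authors to compare against; your computation --- pairing $g_\nu(t)$ with a test function via $(G_{tk}^*\nu,\varphi)=(\nu,\varphi_{tk})$, then differentiating under $\nu$ using the Remark that $(g_\varphi')(0)k=\varphi'k$ --- is exactly the argument the paper's framework (in particular that Remark, which exists for no other purpose) is set up to deliver, and the analytic caveats you flag (differentiability of $t\mapsto\varphi_{tk}$ as an $S(E)$-valued map, and the $S'(E)$ topology being compatible with the duality so that $g_\nu'(0)$ can be tested pointwise) are precisely the hypotheses the paper declares it will not elaborate.
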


The fundamental difference between derivatives of functions and of (generalized) measures is that if a function is translation invariant then its derivative along any vector field vanishes, while the same is not true of measures which are translation invariant, and this fact plays a central role in analysing quantum anomalies: a translation invariant measure ($G_h^*\nu=\nu$ for all $h\in E$) may have non vanishing derivative along a vector field.

In what follows we assume $E$ is a separable Hilbert space; the extension to more general LCSs is also possible. 

We also use a generalized measure taking values in $E$: such an object can either be defined as a continuous linear mapping $S(E)\to E$ or as an element of the topological tensor product of $E$ and $S(E)$, or as a continuous linear functional on the topological tensor product of $E$ with $S(E)$.

\begin{theorem} \label{thm:1}
Let $k\in\vect(E)$ be such that for all $x\in E$, $k'(x)$ exists and is a trace class operator on $E$, and let $\nu\in S'(E)$ be translation invariant (i.e., a Lebesgue-Feynman measure).  Then $\nu'k = \tr(k')\nu$.
\end{theorem}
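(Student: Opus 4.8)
The plan is to test the desired identity against an arbitrary $\varphi\in S(E)$ and to work with the integration-by-parts form of the derivative along a vector field (the analogue of Definition \ref{def:derivative1}), which reads $(\nu'k,\varphi)=-(\nu,\varphi'k)$. Since $\tr(k')\nu$ is, by definition, the generalized measure with generalized density $\tr(k')$, so that $(\tr(k')\nu,\varphi)=(\nu,\tr(k')\,\varphi)$, the assertion $\nu'k=\tr(k')\nu$ is equivalent to the requirement that $(\nu,\varphi'k+\tr(k')\,\varphi)=0$ for every $\varphi\in S(E)$. The integrand here is precisely the divergence of the vector field $\varphi k$: the product rule gives $(\varphi k)'(x)=k(x)\otimes\varphi'(x)+\varphi(x)\,k'(x)$, and taking the trace of this operator (legitimate since $k'(x)$ is trace class, while the rank-one part contributes $\varphi'(x)k(x)$) yields $\tr\big((\varphi k)'(x)\big)=\varphi'(x)k(x)+\tr(k'(x))\,\varphi(x)$. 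Thus the theorem reduces to an infinite-dimensional divergence theorem: a divergence integrates to zero against a Lebesgue-Feynman measure.

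To prove this, I would fix an orthonormal basis $(e_i)$ of $E$ and set $k_i(x)=\langle k(x),e_i\rangle$ and $V_i=\varphi\,k_i$. The product rule gives $\sum_{i=1}^{n}V_i'(\cdot)e_i=\sum_{i=1}^{n}(\varphi'e_i)\,k_i+\varphi\sum_{i=1}^{n}\langle k'e_i,e_i\rangle$, and the two sums converge, by Parseval's identity and by the definition of the trace, to $\varphi'k$ and to $\varphi\,\tr(k')$; hence the partial sums tend to $\varphi'k+\tr(k')\,\varphi$. The decisive observation is that each summand integrates to zero: $V_i'e_i$ is the derivative of the scalar function $V_i$ along the constant direction $e_i$, so the integration-by-parts identity of Definition \ref{def:derivative1}, together with the translation invariance $\nu'e_i=0$, gives $(\nu,V_i'e_i)=-(\nu'e_i,V_i)=0$. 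Consequently $\big(\nu,\sum_{i=1}^{n}V_i'e_i\big)=0$ for every $n$.

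It then remains to pass to the limit: once one knows that $\sum_{i=1}^{n}V_i'e_i\to\varphi'k+\tr(k')\,\varphi$ in the topology of $S(E)$, continuity of $\nu\in S'(E)$ gives $(\nu,\varphi'k+\tr(k')\,\varphi)=\lim_{n}\big(\nu,\sum_{i\le n}V_i'e_i\big)=0$, which finishes the proof. This convergence is the analytic heart of the matter and the step I expect to be the main obstacle. The first sum converges by Parseval, but the convergence of $\sum_i\langle k'(x)e_i,e_i\rangle$ to $\tr(k'(x))$, and, more importantly, its convergence in whatever topology $S(E)$ carries (uniformly on the relevant bounded or compact sets, together with the required derivatives), is exactly what the trace-class hypothesis on $k'$ is designed to secure; the same hypothesis guarantees that $\tr(k')$ is well defined and independent of the basis. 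One also needs the mild structural assumption that the products $\varphi k_i$ belong to $S(E)$, so that the termwise integration by parts is licit.

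An equivalent but more geometric route runs through Definition \ref{def:derivative2-vfd}. Writing $(\nu_{tk},\varphi)=(\nu,\varphi\circ(\mathrm{id}+tk))$ and transforming the translation-invariant $\nu$ under the near-identity map $\mathrm{id}+tk$ produces the Fredholm determinant $\det(I+t\,k'(x))$ as the Jacobian factor; the trace-class hypothesis is precisely what makes this determinant meaningful, and since $\frac{d}{dt}\big|_{t=0}\det(I+tA)=\tr A$, differentiating at $t=0$ and using $\nu'k=-g_\nu'(0)$ reproduces $\nu'k=\tr(k')\,\nu$. This version explains transparently why the trace appears, at the price of requiring the change-of-variables formula for LFMs under nonlinear maps.
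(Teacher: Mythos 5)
Your proof is correct and is essentially the paper's own argument: both fix an orthonormal basis $(e_i)$, apply integration by parts and the translation invariance $\nu'e_i=0$ together with Leibniz' rule for each $i$, and then sum over the basis, using Parseval for the $\varphi'k$ term and the trace-class hypothesis to identify $\sum_i(k'e_i,e_i)$ with $\tr(k')$ (with the same interchange of sum and pairing left as the unelaborated analytic step, which you at least flag explicitly). The only difference is packaging: the paper phrases Leibniz' rule on the measure side, differentiating the $E$-valued measure $k.\nu$ componentwise, whereas you differentiate the scalar functions $\varphi k_i$ on the function side --- the identical computation carried out in dual form.
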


\begin{proof}
Let $(e_n)$ be an orthonormal basis of $E$. For any $n$ and any $\varphi\in S(E)$, the derivative of the generalized $E$-valued measure $k.\nu$ along $e_n$ satisfies the following identity (we use the translation invariance of $\nu$ and Leibnitz' rule):
$$((k.\nu),\varphi'e_n) = -((k.\nu)'e_n,\varphi) = -((k'e_n).\nu,\varphi).$$
But $\sum_{n}((k,e_n)\nu,(\varphi'e_n)) = (\nu,\varphi' \sum_n(k,e_n)e_n) = (\nu,\varphi'k)$. Hence,
$$(\nu,\varphi'k) = -(\sum_n(k'e_n,e_n)\nu,\varphi) = -(\tr(k')\nu,\varphi).$$ 
This means that $\nu'k=\tr(k')\nu$, as required. 
\end{proof}

\begin{remark} \label{rmk:normalization}
Of course, like in the finite dimensional setting, the Lebesgue-Feynman measure is not unique, but it can be normalized by the assumption that the value it takes on the canaonical Gaussian exponent on $E$ given by $\varphi(x) = e^{-(x,x)/2}$ is equal to 1. This  implies (subject to some analytic details) that for any $\psi\in S(E)$
$$(\nu,\psi) = \lim_{n\to\infty} \frac1{(\sqrt{2\pi})^n}\int_{E_n}\psi(x_1,\dots,x_n) \;dx_1\dots dx_n,$$
where $E_n$ is the vector subspace of $E$ spanned by $\{e_1,\dots,e_n\}$. We will not discuss the question of the extent to which this normalization defines $\nu_{\mathrm{LF}}$ uniquely.
\end{remark}

\begin{remark}
There is another way of defining the Lebesgue-Feynman measure---as a limit of a sequence of $\sigma$-additive Gaussian measures whose correlation operators converge to the identity; again the question of uniqueness remains to be discussed.
\end{remark}

\begin{example}
Let $E=W^1_2((0,t),Q)$, with $\dim Q<\infty$, be the Hilbert space of all absolutely continuous functions $f$ from $[0,t)$ to $Q$ vanishing at $0$ and for which $f'\in L^2((0,t),Q)$. The Hilbert norm we take is $\|f\|=\|\dot f\|_{L^2}$. Let $\nu_Q$ be the Lebesgue-Feynman measure on $E$ normalized as in Remark~\ref{rmk:normalization}. If the function
$$\psi_Q:\xi\mapsto e^{\left(\frac12\int_0^t \dot\xi^2(\tau)d\tau + \int_0^t(V(\xi(\tau)+q)d\tau\right)}\varphi_0(\xi(t)+q)$$
belongs to $S(E)$, then
$$\int_E\psi_q(\xi)\nu_Q(d\xi) = \varphi(t,q),$$
where $\varphi$ is the solution of the Cauchy problem for the Schr\"odinger equation 
$$i\dot\varphi(t,\xi) = -\frac12\Delta_q\varphi(t,q) + V(q)\varphi(t,q)$$
with initial data $\varphi_0$. 
Here we use the first of the two definitions of LFM.

This representation of the solution of the equation is a rigorous version of a formula from the paper of Feynman; we would like to call it the second Feynman formula. The first Feynman formula is a representation of the same solution as a limit of integrals over finite Cartesian powers of the configuration space (or some other space). This limit can be considered as a definition of the Lebesgue-Feynman generalized measure, and the integral over Cartesian products as approximations of the value that the LFM takes on the integrand.
\end{example}

\begin{example}
   Let $E$ be the space of functions defined on $[0,t)$ taking values in the phase space $Q\times P$, with $\dim P=\dim Q<\infty$ and such that if $f\in E$ with $f(\tau)=(q(\tau),p(\tau))$ ($\tau\in[0,t)$) then $q\in L^2((0,t),Q)$ and $p\in W^1_2((0,t),P)$ and 
   $$\|f\|^2=\|\dot q\|^2_{L^2}+\|p\|^2_{L^2}.$$
Let $\mathcal{H}$ be a function (having suitable properties) on $Q\times P$, the classical Hamiltonian function, and $\widehat{\mathcal{H}}$  be the operator on $L^2(Q)$, whose Weyl symbol is $\mathcal{H}$, and suppose the function 
$$\psi_{Q\times P}:(q(\cdot),p(\cdot)) \mapsto 
e^{\left(\frac12\int_0^t (q(\tau)+q)\dot p(\tau) d\tau + \int_0^t(\mathcal{H}(q(\tau)+q,p(\tau))d\tau\right)}\varphi_0(q(t)+q)$$
belongs to $S(E)$ and $\nu_{Q\times P}$ is the LFM on $E$ (again, normlized as in Remark \ref{rmk:normalization}). Then 
$$\int\psi_{Q\times P}(\xi)\nu_{Q\times P}(d\xi) = \varphi(t,q)$$
where $\varphi$ is the solution of the Cauchy problem for the Schr\"odinger equation
$$i\dot\varphi_t = \hat{\mathcal{H}}\varphi(t,\cdot)$$
with initial data $\varphi_0$. This representation gives a rigorous version of another `second Feynman formula' (from another paper by Feynman \cite{Feyn51}). 
\end{example}

\begin{remark}
When the phase space is of infinite dimension then the Schr\"odinger type equations are not so easy to define and it is reasonable to use similar formulas as a natural way of quantization.	
\end{remark}

\section{Transformations of generalized measures}
Using some ideas from \cite{SW} (see also \cite{Steb}) we develop formulas for transformations of the Lebesgue-Feynman measures.

Let $I=(\alpha,\beta)\subset\RR$ and let $F:I\times E\to E$ be a twice continuously differentiable mapping such that $F(0,x)=x$ (for all $x$) and that $\forall t\in I$ the map $F(t,\cdot):x\mapsto F(t,x)$ is infinitely differentiable with infinitely differentiable inverse denoted $F^{-1}(t,\cdot)$. We also assume the mapping $F^{-1}:(t,x)\mapsto F^{-1}(t,x)$ is twice continuously differentiable.

For each $\nu\in S'(E)$ let $f^\nu:I\to S(E)$ be defined by $f^\nu(t) = (F^{-1}(t,\cdot))_*\nu$. 

The following theorem is a corollary of Theorem \ref{thm:1} using the vector field defined by $k(F(t,x))=F_1'(t,x)$ (compare with \cite{SW} where the case of usual measures is considered):

\begin{theorem} \label{thm:2}
	Let $\nu\in S'(E)$ be a LFM. Then the differential of $f^\nu$ at $x$ is given by 
	$$(f^\nu)'(t) = \tr(F''_{12}(t,x)\circ(F_2'(t,x)^{-1})f_\nu(t),$$
where the symbols $F_2'$ and $F_{12}''$ denote the corresponding partial derivatives.
\end{theorem}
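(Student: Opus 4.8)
The plan is to reduce the statement to Theorem~\ref{thm:1} by freezing the time variable. Fix $t\in I$ and write, for small $s$,
\[
F^{-1}(t+s,\cdot) = F^{-1}(t,\cdot)\circ\Psi_s,\qquad \Psi_s := F(t,\cdot)\circ F^{-1}(t+s,\cdot),
\]
so that $\Psi_0=\mathrm{id}$. Since pushforward is multiplicative under composition, $(A\circ B)_*=A_*\circ B_*$, this gives
\[
f^\nu(t+s) = \bigl(F^{-1}(t,\cdot)\bigr)_*\bigl[(\Psi_s)_*\nu\bigr].
\]
First I would differentiate this identity in $s$ at $s=0$, so that the outer, time-independent pushforward by $F^{-1}(t,\cdot)$ passes through the derivative and the whole problem is localized to computing $\frac{d}{ds}\big|_{s=0}(\Psi_s)_*\nu$.

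Next I would identify this inner derivative with a derivative of $\nu$ along a vector field, so that Theorem~\ref{thm:1} applies. Differentiating the identity $F(t,F^{-1}(t,x))=x$ in $t$ gives $F_1'(t,z)+F_2'(t,z)\,\partial_t F^{-1}(t,x)=0$ with $z=F^{-1}(t,x)$, whence $\partial_t F^{-1}(t,x) = -\bigl(F_2'(t,z)\bigr)^{-1}F_1'(t,z)$. Substituting this into $\frac{d}{ds}\big|_{s=0}\Psi_s(x) = F_2'(t,z)\,\partial_t F^{-1}(t,x)$ collapses the computation to $\frac{d}{ds}\big|_{s=0}\Psi_s(x) = -F_1'(t,z) = -k(x)$, where $k$ is exactly the (time-$t$) vector field of the hint, $k(F(t,x))=F_1'(t,x)$. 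Because the first-order term of a pushforward sees only the first-order behaviour of the map, $(\Psi_s)_*\nu$ and the elementary shift $(x\mapsto x-sk(x))_*\nu$ have the same $s$-derivative at $0$; comparing with Definition~\ref{def:derivative2-vfd} (and using $\nu'(-k)=-\nu'k$) this derivative equals $\nu'k$. Here the trace-class hypothesis on $k'$ required by Theorem~\ref{thm:1} is what forces the (necessary) assumption that $F_{12}''(t,x)$ be of trace class; granting it, $\frac{d}{ds}\big|_{s=0}(\Psi_s)_*\nu = \nu'k = \tr(k')\,\nu$.

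Finally I would push this back out and convert the trace factor into the stated form. Applying $\bigl(F^{-1}(t,\cdot)\bigr)_*$ and using the density-transport rule $(\Phi^{-1})_*(g\,\nu) = (g\circ\Phi)\,(\Phi^{-1})_*\nu$ (with $\Phi=F(t,\cdot)$ and $g=\tr(k')$) yields $(f^\nu)'(t) = \bigl(\tr(k')\circ F(t,\cdot)\bigr)f^\nu(t)$. It then remains to compute $k'$ by the chain and inverse-function rules: since $k(y)=F_1'(t,F^{-1}(t,y))$ and $\partial_y F^{-1}(t,y)=(F_2'(t,F^{-1}(t,y)))^{-1}$, we get $k'(y)=F_{12}''(t,F^{-1}(t,y))\circ(F_2'(t,F^{-1}(t,y)))^{-1}$, and evaluating at $y=F(t,x)$, so that $F^{-1}(t,y)=x$, gives $\tr(k'(F(t,x)))=\tr\bigl(F_{12}''(t,x)\circ(F_2'(t,x))^{-1}\bigr)$, which is the claimed density.

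The main obstacle I anticipate is analytic rather than algebraic: justifying that the $s$-derivative commutes with the duality pairing and with the outer pushforward, and verifying that $k'$ is genuinely of trace class (equivalently that $F_{12}''$ is) so that Theorem~\ref{thm:1} applies on the nose rather than only formally. The cocycle decomposition $F^{-1}(t+s,\cdot)=F^{-1}(t,\cdot)\circ\Psi_s$ is what makes the reduction to a single frozen-time vector field clean; without it one is left differentiating $\varphi\circ F^{-1}(t,\cdot)$ directly, where the inconvenient factor $(F_2')^{-1}$ appears explicitly and obscures the link to Theorem~\ref{thm:1}.
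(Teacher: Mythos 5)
Your proposal is correct and takes essentially the same approach as the paper: the paper gives no detailed proof of Theorem~\ref{thm:2}, only the remark that it is a corollary of Theorem~\ref{thm:1} applied to the vector field defined by $k(F(t,x))=F_1'(t,x)$, which is precisely the reduction you carry out. Your cocycle decomposition $F^{-1}(t+s,\cdot)=F^{-1}(t,\cdot)\circ\Psi_s$, the identification of $\frac{d}{ds}\big|_{s=0}(\Psi_s)_*\nu$ with $\nu'k=\tr(k')\,\nu$, and the transport of the density back through $\bigl(F^{-1}(t,\cdot)\bigr)_*$ simply supply the details (including the trace-class caveat) that the paper leaves implicit.
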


\begin{theorem} \label{thm:3}
	Again let $\nu\in S'(E)$. Then $f^\nu(t)=\det F_2'(t,\cdot)\nu$.  Here we assume that the function $x\mapsto\det F_2'(t,x)$ is a multiplicator in $S'(E)$ and hence in $S(E)$.
\end{theorem}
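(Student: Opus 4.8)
The plan is to deduce Theorem~\ref{thm:3} from Theorem~\ref{thm:2}: the proposed formula is exactly the solution of the linear evolution equation that Theorem~\ref{thm:2} shows $f^\nu$ satisfies, and one then concludes by uniqueness. Set $D(t,x)=\det F_2'(t,x)$, understood as a Fredholm determinant (well defined once one assumes, in the spirit of Theorem~\ref{thm:1}, that $F_2'(t,x)-I$ is trace class), and define the candidate $g(t)=D(t,\cdot)\,\nu\in S'(E)$, which is meaningful precisely because $D(t,\cdot)$ is assumed to be a multiplicator. The two curves agree at $t=0$: since $F(0,\cdot)$ is the identity, $F_2'(0,x)=I$, so $D(0,\cdot)\equiv1$ and $g(0)=\nu$; likewise $F^{-1}(0,\cdot)$ is the identity, whence $f^\nu(0)=(F^{-1}(0,\cdot))_*\nu=\nu$.

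Next I would check that $g$ obeys the same differential equation as $f^\nu$. Writing $a(t,x)=\tr\bigl(F_{12}''(t,x)\circ(F_2'(t,x))^{-1}\bigr)$, Theorem~\ref{thm:2} gives $(f^\nu)'(t)=a(t,\cdot)\,f^\nu(t)$. Since $\nu$ is independent of $t$, only the density differentiates, so $g'(t)=\partial_t D(t,\cdot)\,\nu$, and the crux is Jacobi's formula applied to the operator curve $t\mapsto F_2'(t,x)$, namely $\partial_t\det F_2'(t,x)=\det F_2'(t,x)\cdot\tr\bigl((F_2'(t,x))^{-1}\,\partial_t F_2'(t,x)\bigr)$. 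Because $\partial_t F_2'=F_{12}''$ (equality of mixed partials, valid as $F$ is twice continuously differentiable) and the trace is cyclic, the right-hand side is $a(t,x)\,D(t,x)$, so $g'(t)=a(t,\cdot)\,g(t)$. Thus $g$ and $f^\nu$ solve the same linear equation with the same multiplicator coefficient and the same initial value; the difference $u=f^\nu-g$ satisfies $u'(t)=a(t,\cdot)\,u(t)$ with $u(0)=0$, and since the coefficient acts by pointwise multiplication, the integrating factor $\exp\bigl(-\int_0^t a(s,\cdot)\,ds\bigr)$ forces $u\equiv0$, i.e. $f^\nu(t)=D(t,\cdot)\,\nu$, as claimed.

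The main obstacle is making Jacobi's formula and the integrating-factor argument rigorous in infinite dimensions. One must show that $t\mapsto\det F_2'(t,\cdot)$ is differentiable with logarithmic derivative equal to the trace above; this is the standard behaviour of Fredholm determinants under trace-class perturbations of the identity, but it leans on the trace-class hypothesis inherited from Theorem~\ref{thm:1} and on continuity of the determinant in trace norm. Equally, the uniqueness step requires that $\exp\bigl(\int_0^t a(s,\cdot)\,ds\bigr)$ be a multiplicator on $S'(E)$ and that the integrating-factor manipulation be legitimate in the topology of $S'(E)$. These are exactly the ``analytic details'' the paper sets aside, and they are where the argument is most delicate.
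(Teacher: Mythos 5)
Your proof is correct and follows essentially the same route as the paper: both deduce Theorem~\ref{thm:3} from the linear evolution equation of Theorem~\ref{thm:2}, using equality of mixed partials $F_{12}''=F_{21}''$ together with the trace--determinant identity (Jacobi's formula, equivalently $\frac{d}{d\tau}\tr\ln F_2' = \tr\bigl(F_{12}''\circ(F_2')^{-1}\bigr)$). The only difference is one of direction: the paper integrates the equation to obtain $\exp\bigl(\int_0^t\tr(\cdots)\,d\tau\bigr)\nu$ and then identifies this exponential with $\det F_2'(t,\cdot)$, whereas you differentiate the determinant candidate and appeal to uniqueness --- a presentational distinction, with your version making explicit the uniqueness step that the paper leaves implicit.
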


\begin{proof}
	The statement of Theorem~\ref{thm:2} means that the function $f^\nu(\cdot)$ is the solution of the differential equation with initial data $f^\nu(0)=\nu$. This in turn means that 
	$$f^\nu(t) = e^{\int_0^t \tr(F_{12}''(\tau,\cdot)\circ F_2'(\tau,\cdot))^{-1}d\tau}\nu.$$ 
However, $F_{12}''=F_{21}''$ and hence the exponent can be transformed as follows	
\begin{eqnarray*}
&&\exp\left(\int_0^t\tr\left(\frac{d}{d\tau}F_2'(\tau,\cdot)\circ F_2'(\tau,\cdot)\right)^{-1}d\tau\right) \\
&=&	\exp\left(\int_0^t\tr\left(\frac{d}{d\tau}\ln F_2'(\tau,\cdot)\right)d\tau\right) \\
&=& \exp\left(\tr\ln F_2'(t,\cdot)\right) = \det F_2'(t,\cdot),
\end{eqnarray*}
which proves the theorem.
\end{proof}

\begin{theorem}[Change of variables formula for Feynman path integrals]\label{thm:4}
	Let $\varphi\in S(E)$ and $\nu\in S'(E)$. Then, using the integral notation introduced above, 
$$\int_E\varphi(F^{-1}(t,x))\nu(dx) = \int_E\varphi(x)\det(F_2'(t,x))\nu(dx).$$
\end{theorem}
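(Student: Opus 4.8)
The plan is to recognize that the two sides of the claimed identity are simply two ways of writing one and the same pairing $(f^\nu(t),\varphi)$ of the generalized measure $f^\nu(t)$ with the test function $\varphi$, and then to invoke Theorem~\ref{thm:3} to identify that measure explicitly. The substantive mathematical content has already been packaged into Theorem~\ref{thm:3}; the present statement is essentially its reformulation in the integral notation, so the work consists of unwinding definitions on each side.

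First I would treat the left-hand side through the definition of $f^\nu$. Since $f^\nu(t)$ is the pushforward $(F^{-1}(t,\cdot))_*\nu$, we have for every $\varphi\in S(E)$ that $(f^\nu(t),\varphi)=(\nu,\varphi\circ F^{-1}(t,\cdot))$, which in the integral notation introduced above reads $(f^\nu(t),\varphi)=\int_E\varphi(F^{-1}(t,x))\,\nu(dx)$. This is precisely the left-hand side of the theorem. The only point requiring care here is the standing (unelaborated) assumption that $\varphi\circ F^{-1}(t,\cdot)$ again lies in $S(E)$; this follows from the smoothness and invertibility hypotheses placed on $F$ together with the closure properties assumed of $S(E)$, and it is what makes the pushforward a genuine element of $S'(E)$.

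Next I would compute the right-hand side using Theorem~\ref{thm:3}, which asserts that this same measure satisfies $f^\nu(t)=\det F_2'(t,\cdot)\,\nu$, the product of the multiplicator $x\mapsto\det F_2'(t,x)$ with $\nu$. Unwinding the definition of that product as a generalized measure gives $(\det F_2'(t,\cdot)\,\nu,\varphi)=(\nu,(\det F_2'(t,\cdot))\varphi)=\int_E\varphi(x)\det(F_2'(t,x))\,\nu(dx)$, which is exactly the right-hand side. Equating the two computations of $(f^\nu(t),\varphi)$ then yields the change of variables formula.

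Because the real work is carried by Theorem~\ref{thm:3}, I do not expect any genuine analytic obstacle in this final step beyond the two well-definedness issues just flagged: that $\varphi\circ F^{-1}(t,\cdot)\in S(E)$, so the left-hand pairing is meaningful, and that $\det F_2'(t,\cdot)$ is a multiplicator, so the right-hand pairing is meaningful. These are precisely the hypotheses the paper chooses not to elaborate; once they are granted, the theorem is a direct restatement of Theorem~\ref{thm:3} with no further estimates needed.
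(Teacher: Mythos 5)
Your proposal is correct and matches the paper's (implicit) argument: the paper states Theorem~\ref{thm:4} without a separate proof precisely because it is the pairing of the identity $f^\nu(t)=\det F_2'(t,\cdot)\,\nu$ from Theorem~\ref{thm:3} against a test function $\varphi$, with the left-hand side unwound via the pushforward definition of $f^\nu$ and the right-hand side via the definition of multiplication by the multiplicator $\det F_2'(t,\cdot)$. Your two flagged well-definedness issues (that $\varphi\circ F^{-1}(t,\cdot)\in S(E)$ and that the determinant is a multiplicator) are exactly the analytic hypotheses the paper leaves unelaborated, so nothing is missing.
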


\begin{remark}
The formula of Theorem~\ref{thm:3} can also be obtained, as has already been mentioned, using the definition of the LFM as a limit of integrals over Cartesian products. Namely, the value the generalized measure $\det(F_2'(t,\cdot)\nu$ takes on $\varphi$ can be calculated as a limit of integrals of Cartesian powers of a suitable space; then to each of these integrals we can apply the usual (finite dimensional) change of variables formula and in this way obtain the values that the generalized measure $f_\nu(t)$ takes at the same function.
\end{remark}

\section{Some remarks on quantum anomalies}


If the dependence of the determinant on time and position is such that $\det F_2'(t,x)\neq1$ then the quantum dynamics may not be invariant with respect to this transformation, even though the classical action, and hence the exponential of $\sqrt{-1}$ times this action,  are invariant. 

The remark in the book \cite{CdWM} about the possibility of multiplying this determinant by a function in order to compensate the change in the determinant and in this way to disprove the main claim in \cite{FS} about the origin of quantum anomalies, cannot be considered as correct, because  the quantum dynamics is described by the generalized measure which is the product of the complex exponential of the action and the Lebesgue-Feynman measure times this determinant, and the non-invariance of this dynamics with respect to a transformation cannot disappear. 

\begin{remark}
Consider the statement of Theorem\,\ref{thm:4} above. If the integrand $\varphi$ (which contains the complex exponential of the action) is invariant with respect to some transformation then the (conventional) integrals from Theorem~\ref{thm:4}, which are in fact the values taken by the original and transformed LFMs on the integrand $\varphi$, are also invariant with respect to the given transformation. But some aspects of the (quantum) dynamics may depend on the whole product $\phi.\nu$ which may not be invariant.
\end{remark}

\begin{remark}
There still remains the question of clarifying the role of renormalization in the problem of quantum anomalies.
\end{remark}

This paper was written during a visit of O.G.~Smolyanov to the University of Manchester, which was partially funded by a Scheme 2 grant from the London Mathematical Society.  He thanks the University for its hospitality and excellent conditions for scientific work.

\small

JM: School of Mathematics, University of Manchester, Manchester M13 9PL, UK.

OGS: Lomonosov Moscow State University, Moscow, Russia

\end{document}